\newcommand{\be}{\begin{equation}}
\newcommand{\en}{\end{equation}}
\newcommand{\bea}{\begin{eqnarray}}
\newcommand{\ena}{\end{eqnarray}}
\newcommand{\beano}{\begin{eqnarray*}}
\newcommand{\enano}{\end{eqnarray*}}
\newcommand{\bee}{\begin{enumerate}}
\newcommand{\ene}{\end{enumerate}}
\newcommand{\Hil}{{\cal H}}
\newcommand{\Id}{1\!\!1}
\newcommand{\F}{{\cal F}}
\newcommand{\Lc}{{\cal L}}
\newcommand{\C}{{\cal C}}
\newcommand{\E}{{\cal E}}
\newtheorem{thm}{Theorem}
\newtheorem{lemma}[thm]{Lemma}
\newtheorem{prop}[thm]{Proposition}
\newenvironment{proof}{\noindent {\bf Proof:}}{\hfill$\Box$}
\begin{document}

\thispagestyle{empty}

\vspace*{1cm}

\begin{center}
{\Large \bf Some invariant biorthogonal sets with an application to coherent states}   \vspace{2cm}\\

{\large F. Bagarello}
\vspace{3mm}\\
 DEIM -Dipartimento di Energia, ingegneria dell' Informazione e modelli Matematici,
\\ Fac. Ingegneria, Universit\`a di Palermo, I-90128  Palermo, and INFN, Torino, Italy\\
e-mail: fabio.bagarello@unipa.it
\vspace{2mm}\\
{\large S. Triolo}
\vspace{3mm}\\
  DEIM -Dipartimento di Energia, ingegneria dell' Informazione e modelli Matematici,
\\ Fac. Ingegneria, Universit\`a di Palermo, I-90128  Palermo, Italy\\
e-mail: salvatore.triolo@unipa.it
\end{center}

\vspace*{2cm}

\begin{abstract}
\noindent We show how to construct, out of a certain basis invariant under the action of one or more unitary operators, a second biorthogonal
set with  similar properties. In particular, we discuss conditions for this new set to be also a basis of the Hilbert space, and we apply the
procedure  to coherent states. We conclude the paper considering a simple application of our construction to pseudo-hermitian quantum
mechanics.
\end{abstract}

\vspace{2cm}


\vfill

\newpage

\section{Introduction}

In the mathematical and physical literature many examples of complete sets of vectors in a given Hilbert space $\Hil$ are constructed starting
from a single normalized element $\varphi_0\in\Hil$,
 acting on this vector several time with a given set of unitary
operators. For instance, this is exactly what happens for coherent states and for wavelets. In the first case one essentially acts several
times on the vacuum of a bosonic oscillator with a modulation and with a translation operator. In the second example, to produce a complete set
of wavelets, one acts   on a {\em mother wavelet} with powers of  a dilation and of a translation operator. In this last situation the result of
this action can be an orthonormal (o.n.) set of vectors, and this is, in fact, the main output of the so-called multi-resolution analysis, \cite{dau}. On the
other hand, this is forbidden for general reasons for coherent states. In two previous papers, \cite{bt1,bt2}, we have considered the following
problem: given a fixed element of $\Hil$, $\varphi_0$, and a certain set of unitary operators, $A_1,\ldots,A_N$, and defining new vectors
$\varphi_{k_1,\dots,k_N}:=A_1^{k_1}\cdots A_N^{k_N}\varphi_0$, $k_j\in\Bbb{Z}$ for all $j=1,2,\ldots,n$, is it possible to produce, out of
these vectors, a new vector $\hat\varphi_0$ such that the new vectors $\hat\varphi_{k_1,\dots,k_N}:=A_1^{k_1}\cdots A_N^{k_N}\hat\varphi_0$
turn out to be mutually orthogonal? The answer was, in general, positive, and we have proposed an invariant procedure which, however, must be
solved perturbatively. It should be mentioned that, for coherent states, that approach didn't work in all of $\Lc^2(\Bbb R)$, but only in some
suitable Hilbert subspaces of $\Lc^2(\Bbb R)$.

Here we consider a slightly different problem, which is also physically motivated by the recent interest on pseudo-hermitian quantum mechanics
and by the role that biorthogonal sets necessarily have in this context, because of the absence of a self-adjoint hamiltonian describing the dynamics of the system under consideration, \cite{bender,ali}. More explicitly, the problem we address in this paper is the following: is it
possible, out of the linearly independent vectors $\varphi_{k_1,\dots,k_N}$ above, to construct a new vector $\Psi_0\in\Hil$ such that the
vectors $\Psi_{k_1,\dots,k_N}:=A_1^{k_1}\cdots A_N^{k_N}\Psi_0$ are biorthogonal to the original ones? And, do these vectors define a basis in
$\Hil$, at least under suitable conditions? This is not a trivial question. In fact, it is known that two biorthogonal sets are not necessarily bases, \cite{bases}.

The paper is organized as follows:

in the next section we state the general problem, discuss the method and show some {\em prototype} examples, in $N=1$.

In Section III we discuss in many details the case of the coherent states (N=2, assuming that $A_1A_2=A_2A_1$), and we find conditions for our procedure to work.

In Section IV we briefly discuss how to extend our procedure to $N\geq3$, assuming again that the operators $A_j$ mutually commute. Also, we consider some relations between our construction and pseudo-hermitian quantum mechanics. Our final considerations are given in Section V.

\section{Stating the problem and first results}

Let $\Hil$ be a Hilbert space, $\varphi\in\Hil$ a fixed element of the space and let $A_1,\ldots,A_N$ be $N$ given unitary operators:
$A_j^{-1}=A_j^\dagger$, $j=1,2,\ldots,N$. Let $\Hil_N$ be the closure of the linear span of the set
\be\F_\varphi=\{\varphi_{k_1,\ldots,k_N}:=A_1^{k_1}\cdots A_N^{k_N}\varphi,\,\, k_1,\ldots k_N\in\Bbb{Z} \}.\label{II1}\en Of course, in order
for this situation to be of some interest, we assume that an infinite elements of $\F_\varphi$ are linearly independent, so to have
$dim(\Hil_N)=\infty$. To simplify the treatment, in the following, we will assume that all the vectors $\varphi_{k_1,\ldots,k_N}$ are
independent.  In this case, by construction, $\F_\varphi$ is a basis for $\Hil_N$. However, in general, there is no reason why the vectors in
$\F_\varphi$ should be mutually orthogonal. On the contrary, without a rather clever choice of both $\varphi$ and $A_1,\ldots,A_N$, it is very
unlikely to obtain an o.n. set. As stated in the introduction, our aim is to discuss some general technique which produces, for suitable $A_j$'s, another vector
$\Psi\in\Hil_N$ such that the set \be\F_\Psi=\{\Psi_{k_1,\ldots,k_N}:=A_1^{k_1}\cdots A_N^{k_N}\Psi,\,\, k_1,\ldots k_N\in\Bbb{Z}
\}\label{II2}\en is biorthogonal to $\F_\varphi$, i.e. $\left<\varphi_{k_1,\ldots,k_N},\Psi_{l_1,\ldots,l_N}\right>= \delta_{k_1,l_1}\cdots \delta_{k_N,l_N}$. Moreover, we would like this set to share as much of the  original features of $\F_\varphi$
as possible. For instance, if the set $\F_\varphi$ is a set of coherent states, we would like the new vectors $\Psi_{k_1,\ldots,k_N}$ to be
also coherent states, in some sense, other than being stable under the action of certain unitary operators. But, first of all, we need
$\F_\Psi$ to be a basis for $\Hil_N$.

We will consider our problem step by step, starting with the simplest situation which is, clearly, $N=1$. In this case the set $\F_\varphi$ in
(\ref{II1}) reduces to $\F_\varphi =\{\varphi_{k}:=A^{k}\,\varphi,\,\, k\in\Bbb{Z} \}$ with $<\varphi_k,\varphi_l>\neq \delta_{k,l}$ (otherwise
we can easily solve the problem by taking $\Psi=\varphi$). Since $\F_\varphi$ is a basis for $\Hil_1$, any element in $\Hil_1$ can be written
in terms of the vectors of $\F_\varphi$. Let $\Psi\in\Hil_1$ be the following linear combination:
\be\Psi=\sum_{k\in\Bbb{Z}}\,c_k\varphi_k,\label{II3}\en and let us define more vectors of $\Hil_1$ as
\be\Psi_n:=A^n\Psi=\sum_{k\in\Bbb{Z}}\,c_k\,\varphi_{k+n}=X\varphi_{n},\label{II4}\en where we have introduced the  operator \be
X:=\sum_{k\in\Bbb{Z}}\,c_k\,A^k.\label{II5}\en Then we introduce the set $\F_\Psi=\{\Psi_{k}:=A^{k}\,\Psi,\,\, k\in\Bbb{Z} \}$. Our main effort
will be to find the coefficients $c_k$ in such a way, first of all, the bi-orthogonalization requirement
$<\Psi_n,\varphi_k>=\delta_{n,k}$, $n$, $k$ in $\Bbb Z$, holds. We also want $\F_\Psi$ to be a basis for $\Hil_1$.

 Of course all the expansions above are, at this stage, only
formal. What makes our formulas rigorous is the asymptotic behavior of the coefficients of the expansion $c_n$. We will come back on this point
in a moment.

The first useful result, which follows directly from the previous definitions, is  that $<\Psi_n,\varphi_0>=\delta_{n,0}$ for all
$n\in\Bbb{Z}$, if and only if $<\Psi_n,\varphi_k>=\delta_{n,k}$, $\forall n,k\in\Bbb{Z}$. For this reason, in order to fix the coefficients
$c_n$, it is enough to require the bi-orthogonality condition $<\Psi_n,\varphi_0>=\delta_{n,0}$, which becomes \be
\delta_{n,0}=<\Psi_n,\varphi_0>=\sum_{k\in\Bbb{Z}}\,\overline{c_k}\,<\varphi_{k+n},\varphi_0>=\sum_{k\in\Bbb{Z}}\,\overline{c_k}\,\alpha_{k+n},
\label{II6}\en where we have defined \be \alpha_j=<A^j\varphi_0,\varphi_0>,\label{II7}\en $j\in\Bbb Z$. If we now multiply both sides of
(\ref{II6}) for $e^{ipn}$, $p\in[0,2\pi[$, and we sum up on $n\in\Bbb{Z}$, we get \be \overline{C(p)}\,\alpha(p)=1,\quad \mbox{a.e. in }
[0,2\pi[.\label{II8}\en Here we have introduced the following functions:\be C(p)=\sum_{l\in\Bbb{Z}}\,c_l\,e^{ipl},\quad
\alpha(p)=\sum_{l\in\Bbb{Z}}\,\alpha_l\,e^{ipl}.\label{II9} \en Again, these series are not necessarily convergent, so that they must be
considered only as formal objects at this stage. However, we can move from formal to {\em true} functions using the following result:

\begin{lemma}\label{lemma1} $\alpha(p)\in \Lc^2(0,2\pi)$ if and only if $\{\alpha_l\}\in l^2(\Bbb Z)$. In this case
$\|\alpha\|^2=2\pi\sum_{l\in\Bbb Z}|\alpha_l|^2$. Analogously, $C(p)\in \Lc^2(0,2\pi)$ if and only if $\{c_l\}\in l^2(\Bbb Z)$. Then
$\|C\|^2=2\pi\sum_{l\in\Bbb Z}|c_l|^2$. When they both exist finite,  $C(p)$ and $\alpha(p)$ are $2\pi$-periodic and real functions.
\end{lemma}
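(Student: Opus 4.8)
The plan is to recognize this statement as a reformulation of the classical $\Lc^2$-theory of Fourier series. The decisive point is that the functions $e_l(p):=e^{ipl}$, $l\in\Bbb{Z}$, form --- after rescaling by $1/\sqrt{2\pi}$ --- a complete orthonormal system in $\Lc^2(0,2\pi)$, because $\int_0^{2\pi}e^{i(l-m)p}\,dp=2\pi\,\delta_{l,m}$ and the trigonometric polynomials are dense there. Hence the map sending a two-sided sequence to the sum of the corresponding trigonometric series is, up to the overall constant $\sqrt{2\pi}$, precisely the standard unitary isomorphism between $l^2(\Bbb{Z})$ and $\Lc^2(0,2\pi)$; both directions of the ``if and only if'', together with the two norm identities, are then nothing but Parseval's relation and the Riesz--Fischer theorem for this system. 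Since the assertion about $\alpha(p),\{\alpha_l\}$ and the one about $C(p),\{c_l\}$ are word-for-word the same, it suffices to argue one of them.

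Concretely, assuming $\{\alpha_l\}\in l^2(\Bbb{Z})$, I would consider the symmetric partial sums $S_N(p)=\sum_{|l|\le N}\alpha_l\,e^{ipl}$: orthogonality gives $\|S_N-S_M\|^2=2\pi\sum_{M<|l|\le N}|\alpha_l|^2$, which tends to $0$ as $M,N\to\infty$, so $(S_N)$ is Cauchy in $\Lc^2(0,2\pi)$ and converges there to a function $\alpha$; letting $N\to\infty$ in $\|S_N\|^2=2\pi\sum_{|l|\le N}|\alpha_l|^2$ yields $\|\alpha\|^2=2\pi\sum_{l\in\Bbb{Z}}|\alpha_l|^2$. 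For the converse, ``$\alpha(p)\in\Lc^2(0,2\pi)$'' should be read as the assertion that these partial sums converge in $\Lc^2(0,2\pi)$; being convergent they are in particular Cauchy, and the same identity $\|S_N-S_M\|^2=2\pi\sum_{M<|l|\le N}|\alpha_l|^2$ forces $\{\alpha_l\}\in l^2(\Bbb{Z})$ --- alternatively, one computes the Fourier coefficients of the limit, identifies them with the $\alpha_l$ by uniqueness of the Fourier expansion, and invokes Bessel's inequality. Replacing $\alpha$ by $C$ and $\alpha_l$ by $c_l$ gives the statement for $C(p)$ verbatim.

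It remains to handle periodicity and reality, which are essentially free. Each $e^{ipl}$ is $2\pi$-periodic, so any $\Lc^2$-limit of finite combinations of them agrees almost everywhere with a $2\pi$-periodic function. For $\alpha(p)$, unitarity of $A$ gives the Hermitian symmetry $\alpha_{-l}=\langle A^{-l}\varphi_0,\varphi_0\rangle=\langle\varphi_0,A^{l}\varphi_0\rangle=\overline{\alpha_l}$ (recall (\ref{II7})), whence $\overline{\alpha(p)}=\sum_{l}\overline{\alpha_l}\,e^{-ipl}=\sum_{l}\alpha_{-l}\,e^{-ipl}=\alpha(p)$ almost everywhere. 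Reality of $C(p)$ is not automatic for a generic $l^2$ sequence of coefficients, but it does hold in the case that matters here: once the biorthogonality requirement is imposed, (\ref{II8}) rearranges to $\overline{C(p)}=\alpha(p)^{-1}$, which is real wherever $\alpha(p)\neq0$; equivalently, $C(p)$ is real as soon as the $c_l$ inherit the symmetry $c_{-l}=\overline{c_l}$, by exactly the same computation as for $\alpha$. I do not foresee a genuine obstacle: the only point needing care is the purely formal one of fixing what ``the series defines an $\Lc^2$ function'' is meant to say, and then checking that the coefficients one started from really are the Fourier coefficients of the resulting function --- which is immediate from uniqueness.
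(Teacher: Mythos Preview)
Your proposal is correct and follows essentially the same approach as the paper. The paper dismisses the $\Lc^2$--$l^2$ equivalence and the norm identities as ``trivial'' and only writes out the reality argument, which is exactly your argument: $\overline{\alpha_l}=\alpha_{-l}$ from unitarity gives $\overline{\alpha(p)}=\alpha(p)$, and then reality of $C(p)$ is read off from (\ref{II8}); you simply spell out the standard Parseval/Riesz--Fischer part that the paper leaves implicit.
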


\begin{proof}

We just prove here the reality of the functions, since the other claims are trivial. The starting point is the following equality:
$\overline{\alpha_l}=\alpha_{-l}$, for all integers $l$, which is a consequence of the definition in (\ref{II7}). Therefore
$\overline{\alpha(p)}=\sum_{l\in\Bbb{Z}}\,\overline{\alpha_l}\,e^{-ipl}=\sum_{k\in\Bbb{Z}}\,\alpha_k\,e^{ipk}=\alpha(p)$, a.e. in $[0,2\pi[$.
>From equality (\ref{II8}) it follows that also $C(p)$ is real.

\end{proof}

To deduce now the expression of $\Psi$, and of the various $\Psi_n$ as a consequence, we need to compute, see (\ref{II3}), the coefficients
$c_l$. These can be deduced by inverting the definition of $C(p)$ in (\ref{II9}) and by using equation (\ref{II8}): \be
c_l=\frac{1}{2\pi}\int_0^{2\pi}\,\frac{e^{-ipl}}{\alpha(p)}\,dp.\label{II10}\en A problem could occur if $\alpha(p)$ has a zero in $[0,2\pi[$.
In this case, in fact, the integral above could be not converging. On the other hand, if $\alpha(p)$ is different from zero a.e. in $[0,2\pi[$,
then all the coefficients $c_l$ surely exist. However, we could  still be in trouble because the series defining $\Psi$ in (\ref{II3}) and $X$
in (\ref{II5}) could be not converging. This depends on how fast the sequence $\{c_l\}$ goes to zero when $l$ diverges. Also, there is no
reason, a priory, for which the properties of $\F_\varphi$ should be shared also by $\F_\Psi$. In particular, there is no reason a priori for
$\F_\Psi$ to be a basis. To address these problems we start noticing that, since $\overline{c_n}=c_{-n}$, it is first clear that $X=X^\dagger$,
if the series for $X$ converges. Now, let us define \be d_l=\frac{1}{2\pi}\int_0^{2\pi}\,e^{-ipl}\,\alpha(p)\,dp, \qquad
Y=\sum_{l\in\Bbb{Z}}d_l\,A^l. \label{IIa}\en Then we have
\begin{prop}\label{prop1}
Let us assume that $\{c_l\}, \{d_l\}\in l^1(\Bbb{Z})$. Then $X$ and $Y$ are bounded operators, and $Y=X^{-1}$. Moreover, $\F_\Psi$ is  the
(only) basis of $\Hil_1$ biorthogonal to $\F_\varphi$.
\end{prop}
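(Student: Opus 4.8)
The plan is to exploit the fact that everything is happening inside the commutative $*$-algebra generated by the single unitary $A$, so that $X$, $Y$, and $A^k$ all commute, and to use the Fourier transform $A^k\mapsto e^{ipk}$ as a concrete model for this algebra acting on $\Hil_1$.

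First I would record the basic algebra. Since $\{c_l\}\in l^1(\Bbb Z)$, the series $X=\sum_l c_l A^l$ converges in operator norm (each $\|A^l\|=1$), so $X$ is bounded with $\|X\|\le\sum_l|c_l|$; likewise $Y=\sum_l d_l A^l$ is bounded. From $\overline{c_n}=c_{-n}$ and $A^\dagger=A^{-1}$ one gets $X=X^\dagger$, and similarly $Y=Y^\dagger$. Next I would show $XY=YX=\Id$. Formally, $XY=\sum_{l,m}c_l d_m A^{l+m}=\sum_n\bigl(\sum_l c_l d_{n-l}\bigr)A^n$, and the convolution $\sum_l c_l d_{n-l}$ is, by (\ref{II10}) and (\ref{IIa}) together with (\ref{II8}), the $n$-th Fourier coefficient of $C(p)\alpha(p)=\overline{C(p)}\,\alpha(p)$ — using that $C$ is real by Lemma~\ref{lemma1} — which equals $1$ a.e., hence equals $\delta_{n,0}$. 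So $XY=\Id$, and by symmetry $YX=\Id$, giving $Y=X^{-1}$; in particular $X$ is a bounded invertible self-adjoint operator. The interchange of summations here is justified by absolute convergence since $\{c_l\},\{d_l\}\in l^1(\Bbb Z)\subset l^2(\Bbb Z)$ and the double series $\sum_{l,m}|c_l||d_m|$ converges.

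Then I would turn to the basis statement. By (\ref{II4}) we have $\Psi_n=X\varphi_n$ for every $n\in\Bbb Z$, i.e. $\F_\Psi=X\F_\varphi$. Since $X$ is bounded and boundedly invertible (with inverse $Y$), it is a topological isomorphism of $\Hil_1$ onto itself, and the image of a (Schauder) basis under such an isomorphism is again a basis. As $\F_\varphi$ is a basis for $\Hil_1$ by hypothesis, $\F_\Psi$ is a basis for $\Hil_1$. Biorthogonality $\langle\Psi_n,\varphi_k\rangle=\delta_{n,k}$ is exactly what the $c_l$ were chosen to satisfy, as established just before Lemma~\ref{lemma1} via (\ref{II6})--(\ref{II10}). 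Finally, uniqueness: if $\F_{\Psi'}$ is another set biorthogonal to the basis $\F_\varphi$, then for each $n$ the vector $\Psi_n'-\Psi_n$ is orthogonal to every $\varphi_k$, hence is $0$ since $\F_\varphi$ is total in $\Hil_1$; so $\F_{\Psi'}=\F_\Psi$. This is the standard fact that a basis has a unique biorthogonal sequence.

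The main obstacle is the middle step: making rigorous the manipulation $XY=\sum_n(c*d)_n A^n$ and identifying $c*d$ with the Fourier coefficients of $C\alpha$, and then with $\delta_{n,0}$. The delicate points are (i) that $\alpha(p)\ne 0$ a.e. must implicitly be assumed for (\ref{II10}) to make sense — I would state this as a hypothesis or note it is forced by $\{c_l\}\in l^1$ — and (ii) that $C,\alpha\in L^\infty$ (true here since $\{c_l\},\{d_l\}\in l^1$ makes $C,\alpha$ continuous) so that $C\alpha$ is genuinely in $L^2$ and its Fourier coefficients are the convolution $c*d$; then (\ref{II8}) upgraded from ``a.e.'' to the statement that the $n$-th Fourier coefficient of $\overline{C}\alpha$ is $\delta_{n,0}$ closes the loop. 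Everything else — boundedness, self-adjointness, the isomorphism argument, uniqueness — is routine once this identity is in hand.
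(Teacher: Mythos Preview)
Your argument is correct and follows the same architecture as the paper's proof: boundedness of $X$ and $Y$ from $l^1$ summability, the convolution identity $\sum_n c_n d_{l-n}=\delta_{l,0}$ yielding $XY=YX=\Id$, and then the fact that a bounded invertible operator maps a basis to a basis, with uniqueness of the biorthogonal system. The only noticeable difference is how you verify the convolution identity: the paper writes $\sum_n c_n d_{l-n}$ as a double integral over $[0,2\pi[^2$ and collapses it via the Poisson summation formula $\sum_n e^{in(q-p)}=2\pi\,\delta(q-p)$, whereas you recognise $c*d$ directly as the sequence of Fourier coefficients of $C(p)\alpha(p)$, which equals $1$ a.e.\ by (\ref{II8}) and the reality of $C$ from Lemma~\ref{lemma1}. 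Your route is slightly cleaner (it stays within the Wiener algebra $A(\mathbb T)$ and avoids distributional deltas), but the two computations are the same in content.
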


\begin{proof}
$X$ and $Y$ are clearly bounded. To prove that $Y=X^{-1}$, we start showing that $\sum_{n\in\Bbb{Z}}c_nd_{l-n}=\delta_{l,0}$. For that we will
use the Poisson summation rule, \cite{dau},
$\sum_{n\in{\Bbb{Z}}}e^{ixan}=\frac{2\pi}{|a|}\sum_{n\in{\Bbb{Z}}}\delta\left(x-\frac{2\pi}{a}n\right)$, $a\neq 0$. Using the definitions of
$c_n$ and $d_n$ we have:
$$
\sum_{n\in\Bbb{Z}}c_nd_{l-n}=\frac{1}{(2\pi)^2}\int_0^{2\pi}dp\int_0^{2\pi}dq\,\frac{\alpha(q)}{\alpha(p)}\,e^{-iql}\sum_{n\in\Bbb{Z}}e^{in(q-p)}.
$$
Since $q, p\in [0,2\pi[$, we have
$$
\sum_{n\in\Bbb{Z}}e^{in(q-p)}=2\pi \sum_{n\in\Bbb{Z}}\delta(q-p-2\pi n)=2\pi\delta(q-p),
$$
so that \be
\sum_{n\in\Bbb{Z}}c_nd_{l-n}=\frac{1}{2\pi}\int_0^{2\pi}dp\int_0^{2\pi}dq\,\frac{\alpha(q)}{\alpha(p)}\,e^{-iql}\delta(q-p)=\frac{1}{2\pi}\int_0^{2\pi}dp\,e^{-ipl}
=\delta_{l,0}. \label{IIb}\en As a consequence,
$$
XY=\sum_{n,m\in\Bbb{Z}}c_n\,d_m\,A^{n+m}=\sum_{l\in\Bbb{Z}}\left(\sum_{n\in\Bbb{Z}}c_n\,d_{l-n}\right)A^{l}=\sum_{l\in\Bbb{Z}}\delta_{l,0}A^l=\Id.
$$
Analogously we find that $YX=\Id$, so that $Y=X^{-1}$.

Therefore, since $X$ and $X^{-1}$ are bounded, and since $\F_\Psi$ is the image via $X$ of a basis of $\Hil_1$, $\F_\varphi$, see (\ref{II4}), $\F_\Psi$ is also
a basis for $\Hil_1$. Biorthogonality is clear while uniqueness follows from \cite{bases}.

\end{proof}

The next Lemma gives sufficient conditions for Proposition \ref{prop1} to be applicable.

\begin{lemma} \label{lemma2} Let $\alpha(p)$ be, together with its derivative $\alpha'(p)$, a function in $\Lc^2(0,2\pi)$. Then $\{d_l\}\in l^1(\Bbb Z)$.
 Moreover, if
$\alpha_{inv}(p):=\frac{1}{\alpha(p)}$ is, together with its derivative $\alpha_{inv}'(p)$, a function in $\Lc^2(0,2\pi)$, then $\{c_l\}\in
l^1(\Bbb Z)$.
\end{lemma}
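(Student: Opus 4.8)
The plan is to recognize that, by the formulas (\ref{IIa}) and (\ref{II10}), the sequences $\{d_l\}$ and $\{c_l\}$ are precisely the Fourier coefficients of the $2\pi$-periodic functions $\alpha(p)$ and $\alpha_{inv}(p)=1/\alpha(p)$, respectively, and then to invoke the classical fact that a periodic $H^1$ function has absolutely summable Fourier coefficients. I will carry this out first for $\{d_l\}$; the argument for $\{c_l\}$ is then obtained by repeating it verbatim with $\alpha$ replaced by $\alpha_{inv}$.

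First I would use that $\alpha,\alpha'\in\Lc^2(0,2\pi)$ forces $\alpha$ to agree a.e. with an absolutely continuous function (the one-dimensional Sobolev embedding $H^1\hookrightarrow AC$), so that integration by parts is legitimate, and moreover $\alpha$ is $2\pi$-periodic by Lemma \ref{lemma1}, whence $\alpha(0)=\alpha(2\pi)$ and the boundary term vanishes. Integrating by parts in the definition of $d_l$ then gives
\be
\frac{1}{2\pi}\int_0^{2\pi} e^{-ipl}\,\alpha'(p)\,dp = il\,d_l,\qquad l\in\Bbb Z,
\en
so $\{il\,d_l\}$ is the sequence of Fourier coefficients of $\alpha'$. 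Since $\alpha'\in\Lc^2(0,2\pi)$, Lemma \ref{lemma1} applied to $\alpha'$ yields $\{il\,d_l\}\in l^2(\Bbb Z)$. The key estimate is then a Cauchy--Schwarz bound against the summable sequence $\{1/l^2\}$:
\be
\sum_{l\in\Bbb Z}|d_l| = |d_0| + \sum_{l\neq 0}\frac{1}{|l|}\,|l\,d_l|
\le |d_0| + \left(\sum_{l\neq 0}\frac{1}{l^2}\right)^{1/2}\left(\sum_{l\neq 0}|l\,d_l|^2\right)^{1/2} < \infty ,
\en
which proves $\{d_l\}\in l^1(\Bbb Z)$. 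For the second claim, the hypotheses $\alpha_{inv},\alpha_{inv}'\in\Lc^2(0,2\pi)$ make $\alpha_{inv}$ absolutely continuous and $2\pi$-periodic (again by Lemma \ref{lemma1}, since $\alpha$ and hence $\alpha_{inv}$ are $2\pi$-periodic), integration by parts gives that the Fourier coefficients of $\alpha_{inv}'$ are $\{il\,c_l\}\in l^2(\Bbb Z)$, and the same Cauchy--Schwarz inequality yields $\{c_l\}\in l^1(\Bbb Z)$.

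The only delicate point — the step I would be most careful about — is the justification of the integration by parts, i.e. that $\alpha$ (resp. $\alpha_{inv}$) is regular enough for the identity $\widehat{\alpha'}(l)=il\,d_l$ to hold with vanishing boundary terms. This is where the $2\pi$-periodicity supplied by Lemma \ref{lemma1} and the one-dimensional Sobolev embedding do the real work; if one prefers to avoid the embedding explicitly, one can instead argue directly on the level of sequences, observing that $\alpha'(p)=\sum_l(il\,d_l)e^{ipl}$ in $\Lc^2$-sense since $\{il\,d_l\}\in l^2$, and identifying this with the distributional derivative of the $\Lc^2$-convergent series for $\alpha$. Everything after that is routine.
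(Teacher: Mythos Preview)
Your proof is correct and follows essentially the same route as the paper's: integrate by parts using the $2\pi$-periodicity to identify $il\,d_l$ with the $l$-th Fourier coefficient of $\alpha'$, then use $\alpha'\in\Lc^2(0,2\pi)$ to conclude $\{d_l\}\in l^1(\Bbb Z)$. Your Cauchy--Schwarz step is in fact cleaner and more rigorous than the paper's informal claim that the Fourier coefficients of an $\Lc^2$ function must decay like $l^{-1/2-\epsilon}$ (which is not literally true for an arbitrary $\Lc^2$ function); your argument, by contrast, uses only that those coefficients lie in $l^2(\Bbb Z)$, which is exactly what $\alpha'\in\Lc^2$ guarantees.
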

\begin{proof}
Using the periodicity of $\alpha(p)$ and integrating by parts we get
$$
d_l=\frac{1}{2\pi}\int_0^{2\pi}\,e^{-ipl}\,\alpha(p)\,dp=\frac{1}{2i\pi l}\int_0^{2\pi}\,e^{-ipl}\,\alpha'(p)\,dp.
$$
But, since $\alpha'(p)\in \Lc^2(0,2\pi)$, the modulus of this last integral must decreases to zero at least as $\frac{1}{l^{1/2+\epsilon}}$,
for some positive $\epsilon$. This means that $|d_l|$ decreases to zero at least as $\frac{1}{l^{3/2+\epsilon}}$, and therefore $\{d_l\}\in
l^1(\Bbb Z)$.

The other claim can be proved in the same way.

\end{proof}

\vspace{2mm}

{\bf Remarks:} (1) Equation (\ref{IIb}) shows, in particular, that the sequences $\{c_l\}$ and $\{d_l\}$ might be considered one as a sort of
{\em inverse} of the other. In fact: $\sum_{n\in{\Bbb{Z}}}{c_n}\,d_n=1$.

\vspace{2mm}

(2) Under the assumptions of Proposition \ref{prop1},  if $X$ is also  positive, then the set $\E=\{e_n:=X^{1/2}\varphi_n, \,n\in\Bbb{Z}\}$ is
an orthonormal basis for $\Hil_1$ and $\F_\varphi$ and $\F_\Psi$ are biorthogonal Riesz bases for $\Hil_1$.

\subsection{Simple examples}

We consider now some easy examples of our strategy, for which most of the computations can be performed analytically.

\subsubsection{Example 1}

 Let $\varphi_0(x)=\chi_{[0,a[}(x)$ be the characteristic function of the interval $[0,a[$, with $a>0$, and let $A$ be the
following translation operator: $A=e^{-i\hat p}$. We have
$$\F_\varphi=\{\varphi_{n}(x):=A^{n}\varphi_0(x)=\chi_{[n,n+a[}(x),\,\,
n\in\Bbb{Z} \}.$$ Let $\Hil_1$ be the Hilbert space spanned by these functions. Clearly $\Hil_1\subset\Lc^2(\Bbb R)$. We want to see what our
procedure produces starting with this set. For that, it is convenient to consider separately the cases $a<1$, $a=1$ and $a>1$. Let us start
with the easiest case, $a=1$. In this case the set $\F_\varphi$ is  made by o.n.  functions. Indeed we have
$\alpha_j=<\varphi_j,\varphi_0>=\delta_{j,0}$. Therefore $\alpha(p)=1$, which is obviously never zero, $2\pi$-periodic and square integrable in
$[0,2\pi[$. Moreover, $c_l=d_l=\delta_{l,0}$, see (\ref{II10}) and (\ref{IIa}). From (\ref{II4}) we deduce that $\Psi_n(x)=\varphi_n(x)$ for
all integer $n$. It is clear that both $X$ and $Y=X^{-1}$ exist, and they  both coincide with the identity operator.

Just a little less trivial is the situation when $a<1$. In this case, in fact, the set $\F_\varphi$ is still made of orthogonal functions,
since each $\varphi_{n}(x)=\chi_{[n,n+a[}(x)$ does not overlap with any other $\varphi_{k}(x)=\chi_{[k,k+a[}(x)$, if $k\neq n$. However, none
of these functions is normalized so that we may expect that our procedure simply {\em cures} this feature. Indeed we have
$\alpha_j=\left<\varphi_j,\varphi_0\right>=a\delta_{j,0}$, so that $\alpha(p)=a$, which is again never zero, $2\pi$-periodic and square
integrable in $[0,2\pi[$. We deduce $c_l=\frac{1}{a}\,\delta_{l,0}$ and $ d_l=a\,\delta_{l,0} $ . Therefore, among other properties, $\{c_l\},
\{d_l\}\in l^1(\Bbb{Z})$, so that Proposition \ref{prop1} applies. Indeed, $X=\frac{1}{a}\Id$ is bounded with bounded inverse,
$\Psi_n(x)=\frac{1}{a}\,\varphi_n(x)$,
 $\forall\,n\in \Bbb Z$, and $\left<\Psi_n,\varphi_k\right>=\delta_{n,k}$: the set $\F_\Psi$ is a basis of $\Hil_1$ which is biorthogonal to $\F_\varphi$.

Surely more interesting is the case $a>1$. We restrict ourselves, for the time being, to $1<a<2$. The overlap coefficients $\alpha_j$ can
  be written as
$\alpha_j=a\,\delta_{j,0}+(a-1)\left(\delta_{j,-1}+\delta_{j,1}\right)$, so that $\alpha(p)=a+2(a-1)\cos(p)$. This is a nonnegative, real and
$2\pi$-periodic function, as expected, and furthermore it is never zero in $[0,2\pi[$ since it has a minimum in $p=\pi$ and
$\alpha(\pi)=2-a>0$. If we fix, just to be concrete, $a=\frac{3}{2}$, we can compute analytically
$\sum_{l\in{\Bbb{Z}}}\,|c_l|^2=\frac{1}{2\pi}\int_0^{2\pi}\frac{dp}{\alpha(p)}=\frac{12}{5\sqrt{5}}$. Therefore the sequence $\{c_l\}$ belongs
to $l^2({\Bbb{Z}}).$ Moreover  Lemma \ref{lemma2} guarantees that the sequences $\{c_l\}$ and $\{d_l\}$ are also in $l^1({\Bbb{Z}}).$ For
instance, an easy computation shows that $$ d_l=\frac{1}{2 \pi}\int_0^{2 \pi} e^{-ipl} (\cos(p)+3/2) dp  = \frac{3}{2}\delta_{l,0}+ \frac{1}{2}
(\delta_{l,-1} + \delta_{l,1} ).
 $$
A bit more complicated, but still analytically doable, is the computation of $c_l$. Incidentally, is it possible to check that
 $\Sigma_{n\in{\Bbb{Z}}} c_n d_{n-l}=\delta_{l,0} $.

  For $a\geq 2$ the function $\alpha(p)$ admits a zero.
For instance, if $a=2$ the overlap coefficients are the same as for $a\in]1,2[$,
$\alpha_j=a\,\delta_{j,0}+(a-1)\left(\delta_{j,-1}+\delta_{j,1}\right)=2 \,\delta_{j,0}+\left(\delta_{j,-1}+\delta_{j,1}\right)$, so that
$\alpha(p)=2+2\cos(p)$. This  is zero for $p=\pi$ and one can check that $\sum_{l\in{\Bbb{Z}}}\,|c_l|=\sum_{l\in{\Bbb{Z}}}\,|c_l|^2=+\infty$:
in this case, our framework does not work.

\subsubsection{Example 2}

Another interesting and easy example can be constructed as follows: let $\varphi_0(x)=\chi_{[0,1[}(x)$ and let $A$ be the dilatation operator
$(Ah)(x)=\sqrt{2}\,h(2x)$, $\forall\,h(x)\in\Lc^2(\Bbb{R})$. Then the set $\F_\varphi$ turns out to be
$$
\F_\varphi=\left\{\varphi_n(x)=2^{n/2}\varphi(2^nx)=2^{n/2}\left\{
\begin{array}{ll}
1, \quad\mbox{if }0< x\leq 2^{-n},  \\
0, \quad\mbox{otherwise},
\end{array}
\right.\quad n\in\Bbb{Z}\right\}.
$$
In this case all the overlap coefficients $\alpha_j$ are different from zero. Indeed we get $\alpha_j=2^{-|j|/2}$, for all $j\in\Bbb{Z}$. Since
$\left|\frac{e^{\pm ip}}{\sqrt{2}}\right|=\frac{1}{\sqrt{2}}<1$, it is easy to compute the analytic expression of $\alpha(p)$ and it turns out
that $\alpha(p)=\frac{1}{3-2^{3/2}\cos(p)}$. The minimum of $\alpha(p)$ is found again for $p=\pi$, and $\alpha(\pi)=\frac{1}{3+2^{3/2}}\simeq
0.1716$, which is different from zero. Moreover we find that $\max(\alpha(p))=\alpha(0)=\frac{1}{3-2^{3/2}}\simeq 5.8284$. The $\|.\|_2$-norm
of the sequence $\{c_l\}$ can be computed analytically and we find
$\sum_{l\in{\Bbb{Z}}}\,|c_l|^2=\frac{1}{2\pi}\int_0^{2\pi}\frac{dp}{\alpha(p)}=3$.
 Using  Lemma \ref{lemma2} we conclude that the sequences $\{c_l\}$ and $\{d_l\}$ are in $l^1({\Bbb{Z}}).$
In particular, for instance, the coefficients $c_l$ look like  \be \label{clesempio2} c_l=\frac{1}{2\pi}\int_0^{2\pi}\, e^{-ipl}
(3-{2}^{\frac{3}{2}} \cos(p)) dp=3\delta_{l,0}-\sqrt{2}(\delta_{l,1}+\delta_{l,-1}), \en so that $X=3 \Id -\sqrt{2}\left( A+ A^{-1}\right)$.


 Of course, we could use these coefficients to define the {\em new} set of vectors which are biorthogonal to $\F_\varphi$ using (\ref{II3}) and (\ref{II4}).

\subsubsection{A possible generalization}

It is not difficult to generalize the previous example. For that, we suppose that the operator $A$ and the seed vector $\varphi_0$ satisfy the
following condition
$$\alpha_j=<A^j\varphi_0, \varphi_0>=r^{\mid j \mid},$$
where $r$ is a fixed real quantity with $0\leq r<1$. In particular, if $r=\frac{1}{\sqrt{2}}$, we go back to Example 2. Similar examples could
be constructed by replacing the original dilation, typical of a {\em standard} multi-resolution analysis, \cite{dau}, with a different
dilation, for instance with $(\tilde Ah)(x)=\sqrt{3}h(3x)$. Then the series for $\alpha(p)$ is convergent and we get
$$
\alpha(p) =\sum_{l\in \Bbb Z}\alpha_le^{ipl}=\frac{1-r^2}{1+r^2-2r\cos(p)}.
$$

Lemma \ref{lemma2} can be applied, and the sequences $\{c_l\}$ and $\{d_l\}$ are both in $l^1({\Bbb{Z}})$. In particular, for instance,
$\{c_l\}$ is a finite sequence

  \be \label{pre} c_l=\frac{1}{2\pi}\int_0^{2\pi}\,
\frac{e^{-ipl} ({1+r^2-2r\cos{p}})}{{1-r^2}}dp=\delta_{l,0}\frac{1+r^2}{1-r^2}-(\delta_{l,1}+\delta_{l,-1})\frac{r}{1-r^2}.\en

Therefore $\Psi_0$ is simply a linear combination of $\varphi_0$, $\varphi_1$ and $\varphi_{-1}$.

\vspace{3mm}

The extension of this results to $N\geq2$ is straightforward whenever the operators $A_j$ in (\ref{II1}) mutually commute, which is the relevant case, for instance, for the coherent states we are going to consider, in many details, in Section III. At the end of Section III we will also briefly sketch what happens for $N\geq3$.

\section{Coherent states}

This section is devoted to a  more interesting example involving coherent states, \cite{aag,gazbook}. We will see that the set of coherent
states fits (and extend to $N=2$) the general discussion of Section II, and we will show how and when the bi-orthogonalization procedure works.

Let $\hat q$ and $\hat p$ be the position and momentum operators on a Hilbert space $\Hil$, $[\hat q,\hat p]=i\Id$, and let us now introduce
the following unitary operators: \be U(\underline n)=e^{ia(n_1\hat q-n_2\hat p)}, \quad D(\underline n)=e^{z_{\underline{n}}
b^\dagger-\overline{z}_{\underline n} b},\quad T_1:=e^{ia\hat q},\quad T_2:=e^{-ia\hat p}.\label{31}\en Here $a$ is a real constant satisfying
$a^2=2\pi L$ for some $L\in\Bbb{N}$, while $z_{\underline{n}}$ and $b$ are related to ${\underline{n}}=(n_1,n_2)$ and $\hat q$, $\hat p$ via
the following equalities: \be z_{\underline{n}}=\frac{a}{\sqrt{2}}(n_2+in_1),\quad b=\frac{1}{\sqrt{2}}(\hat q+i\hat p). \label{32a}\en With
these definitions it is clear that \be U(\underline n)=D(\underline
n)=(-1)^{Ln_1n_2}T_1^{n_1}T_2^{n_2}=(-1)^{Ln_1n_2}T_2^{n_2}T_1^{n_1},\label{33}\en where we have also used the commutation rule $[T_1,T_2]=0$,
which follows from the possible values of $a$.

Let $\varphi_{\underline{0}}$ be the vacuum of $b$, $b\,\varphi_{\underline{0}}=0$, and let us define the following {\em coherent states}: \be
\varphi_{\underline{n}}^{(L)}:=T_1^{n_1}T_2^{n_2}\varphi_{\underline{0}}=T_2^{n_2}T_1^{n_1} \varphi_{\underline{0}} =(-1)^{Ln_1n_2}U(\underline
n)\varphi_{\underline{0}}=(-1)^{Ln_1n_2}D(\underline n)\varphi_{\underline{0}}.\label{34}\en Notice that $\varphi_{\underline{n}}^{(L)}$ is exactly as in (\ref{II1}), with $N=2$, identifying $T_j$ with $A_j$. Notice also that, as it is needed, $[T_1,T_2]=0$.

Here and in the following we will write explicitly
the label $L$ whenever this will be important for us, but not everywhere. For instance, $z_{\underline{n}}$ also depends on $L$, but we will
not stress this dependence. On the other hand, since $b$ does not depend on $L$, $\varphi_{\underline{0}}$ is also independent of $L$. However,
since the unitary operators $T_1$ and $T_2$ do depend on $a$, and on $L$ as a consequence, $T_1^{n_1}T_2^{n_2}\varphi_{\underline{0}}$ are also
$L$-dependent. It is very well known that the set of these vectors,
$\F_\varphi^{(L)}=\{\varphi_{\underline{n}}^{(L)},\,\underline{n}\in{\Bbb{Z}}^2\}$, satisfies, among the others, the following properties:
\begin{enumerate} \label{propcoerenti}
\item $\F_\varphi^{(L)}$ is invariant under the action of
$T_j^{n_j}$, $j=1,2$;
\item each $\varphi_{\underline{n}}^{(L)}$ is an eigenstate of $b$:
$b\,\varphi_{\underline{n}}^{(L)}=z_{\underline{n}}\,\varphi_{\underline{n}}^{(L)}$;\item if $L=1$ they satisfy the {\em resolution of the
identity} $\sum_{\underline{n}\in{\Bbb{Z}}^2}\,|\varphi_{\underline{n}}^{(1)}><\varphi_{\underline{n}}^{(1)}|=\Id_1$, where $\Id_1$ is the
identity in $\Lc^2(\Bbb R)$.
\end{enumerate}
Moreover, it is also well known  that they are not mutually orthogonal. Indeed we have: \be
I_{\underline{n}}^{(L)}:=<\varphi_{\underline{n}}^{(L)},\varphi_{\underline{0}}>=(-1)^{Ln_1n_2}\,e^{-\frac{\pi}{2}L(n_1^2
+n_2^2)}.\label{35}\en Of course, for large $L$  the set $\F_\varphi^{(L)}$ can be considered as {\em approximately orthogonal}, since
$I_{\underline{n}}^{(L)}\simeq 0$ for all $\underline{n}\neq \underline 0$. On the contrary, for small $L$,  the overlap between nearest
neighboring vectors is significantly different from zero.

Our aim is to construct a family of vectors $\F_\Psi^{(L)}$ which shares with $\F_\varphi^{(L)}$ most of the above features and which,
moreover, is biorthogonal to $\F_\varphi^{(L)}$. It is important to recall, see \cite{bt1} and references therein, that the set
$\F_\varphi^{(L)}$ is complete in $\Hil:=\Lc^2(\Bbb R)$ if and only if $L=1$. However, this does not prevent us to define, for each $L\geq 1$,
the following set: \be h_L:=\mbox{linear span}\overline{\left\{\varphi_{\underline{n}}^{(L)},\,\underline{n}\in{\Bbb{Z}}^{ 2}\right\}}^{\|.\|}.
\label{36}\en It is clear that $h_1=\Hil$, while, for $L>1$, $h_L\subset \Hil$. It is also clear that $h_L$ is an Hilbert space for each $L$,
since it is a closed subspace of $\Hil$.

We start our procedure here by extending formula (\ref{II4}) to the present settings: let $\Psi_{\underline{0}}^{(L)}\in h_L$ be the following
linear combination:
\be\Psi_{\underline{0}}^{(L)}=\sum_{\underline{k}\in\Bbb{Z}^2}\,c_{\underline{k}}^{(L)}\varphi_{\underline{k}}^{(L)},\label{31b}\en and let us
introduce more vectors of $h_L$ as
\be\Psi_{\underline{n}}^{(L)}=T_1^{n_1}T_2^{n_2}\Psi_{\underline{0}}^{(L)}=\sum_{\underline{k}\in\Bbb{Z}^2}\,c_{\underline{k}}^{(L)}\,
\varphi_{\underline{k}+\underline{n}}^{(L)}=X^{(L)}\varphi_{\underline{n}}^{(L)},\label{32}\en where we have introduced the  operator \be
X^{(L)}=\sum_{\underline{k}\in\Bbb{Z}^2}\,c_{\underline{k}}^{(L)}\,T_1^{k_1}T_2^{k_2}.\label{33b}\en Of course, since $c_{\underline{n}}^{(L)}$
also depends on $L$, a similar dependence is also shared by $X^{(L)}$, and we are making it explicit.  Then we introduce the set
$\F_\Psi^{(L)}=\{\Psi_{\underline{n}}^{(L)},\,\, \underline{k}\in\Bbb{Z}^2 \}$. Once again, our main effort will be to find the coefficients
$c_{\underline{k}}^{(L)}$ in such a way that the following bi-orthogonalization requirement
$<\Psi_{\underline{n}}^{(L)},\varphi_{\underline{k}}^{(L)}>=\delta_{\underline{n}, \underline{k}}$ holds for all $\underline{n},\underline{k}$
in ${\Bbb Z}^2$. We also want $\F_\Psi^{(L)}$ to be a basis for $h_L$.

First of all, it is easy to check that $<\Psi_{\underline{n}}^{(L)},\varphi_{\underline{k}}^{(L)}>=\delta_{\underline{n}, \underline{k}}$,
$\forall\,\underline{n},\underline{k}\in {\Bbb Z}^2$, if and only if
$<\Psi_{\underline{n}}^{(L)},\varphi_{\underline{0}}>=\delta_{\underline{n}, \underline{0}}$, $\forall\,\underline{n}\in {\Bbb Z}^2$. Inserting
in this equality the expansion (\ref{32}) for $\Psi_{\underline{n}}^{(L)}$ we get $\sum_{\underline{k}\in {\Bbb
Z}^2}\overline{c_{\underline{k}}^{(L)}}\, I_{\underline{k}+\underline{n}}^{(L)}$. Multiplying both sides for
$e^{i\underline{p}\cdot\underline{n}}$ and summing up on $\underline{n}\in {\Bbb Z}^2$ we get \be
I^{(L)}(\underline{p})\,\overline{C^{(L)}(\underline{p})}=1, \qquad \mbox{a.e. in } \C, \label{34b} \en where $\C=[0,2\pi[\times[0,2\pi[$, and
where the functions above are defined as follows:
$$
C^{(L)}(\underline{p})=\sum_{\underline{k}\in {\Bbb Z}^2}\,c_{\underline{k}}^{(L)}\,e^{i\underline{p}\cdot\underline{k}}, \qquad
I^{(L)}(\underline{p})=\sum_{\underline{k}\in {\Bbb Z}^2}\,I_{\underline{k}}^{(L)}\,e^{i\underline{p}\cdot\underline{k}}.
$$
It is clear that the first formula can be inverted, producing $c_{\underline{k}}^{(L)}$ out of $C^{(L)}(\underline{p})$: \be
c_{\underline{k}}^{(L)}=\frac{1}{(2\pi)^2}\,\int_0^{2\pi}\int_0^{2\pi}\,\frac{e^{-i\underline{p}
\cdot\underline{k}}}{I^{(L)}(\underline{p})}\,d^2\underline{p},\label{34bis}\en at least when this integral exists. In deriving (\ref{34bis})
we have also used that $I^{(L)}(\underline p)$, and $C^{(L)}(\underline p)$ as a consequence, are real functions.

\vspace{2mm}

{\bf Remark:--} Taking into account the fact that $T_1$ and $T_2$ are unitary operators,
$\|\Psi_{\underline{n}}^{(L)}\|=\|\Psi_{\underline{0}}^{(L)}\|$ for all $\underline{n}$. Hence, all the elements in $\F_\Psi^{(L)}$ are well
defined vectors in $\Lc^2(\Bbb R)$ if and only if the coefficients $c_{\underline{k}}^{(L)}$ satisfy the following inequality:
$$
\|\Psi_{\underline{0}}^{(L)}\|^2=\sum_{\underline{k},\underline{l}\in\Bbb{Z}^2}\,\overline{c_{\underline{k}}^{(L)}}
\,c_{\underline{l}}^{(L)}\,I_{\underline{k}-\underline{l}}^{(L)}<\infty,
$$
where $I_{\underline{k}-\underline{l}}^{(L)}$ are deduced from (\ref{35}).

\vspace{2mm}

In analogy with what we have done in the previous section, we can prove the following results:

\begin{lemma} $I^{(L)}(\underline p)\in \Lc^2(\C)$ if and only if $\{I_{\underline{l}}^{(L)}\}\in l^2({\Bbb Z}^2)$,
and $\|I^{(L)}\|^2=(2\pi)^2\sum_{\underline{l}\in{\Bbb Z}^2}|I_{\underline{l}}^{(L)}|^2$. Also, $C^{(L)}(\underline p)\in \Lc^2(\C)$ if and
only if $\{c_{\underline{l}}^{(L)}\}\in l^2({\Bbb Z}^2)$, and $\|C^{(L)}\|^2=(2\pi)^2\sum_{\underline{l}\in{\Bbb
Z}^2}|c_{\underline{l}}^{(L)}|^2$. When they exist finite, both $C^{(L)}(\underline p)$ and $I^{(L)}(\underline p)$ are periodic and real
functions.
\end{lemma}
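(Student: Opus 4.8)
The plan is to read this statement as the two-dimensional analogue of Lemma \ref{lemma1} and to recycle that proof almost verbatim. The one structural fact I would use is that the family $\{e_{\underline l}(\underline p):=e^{i\underline p\cdot\underline l},\ \underline l\in\Bbb{Z}^2\}$ is a complete orthogonal system in $\Lc^2(\C)$, $\C=[0,2\pi[\times[0,2\pi[$, with $\|e_{\underline l}\|^2=(2\pi)^2$; this is simply the tensor product of the one-dimensional Fourier system on $[0,2\pi[$ (already invoked in the proof of Lemma \ref{lemma1}) with itself, so I would either quote it or bootstrap it from the $N=1$ case. Granting this, the two "isometry" assertions are immediate consequences of Riesz--Fischer and Parseval: the series $\sum_{\underline l}I_{\underline l}^{(L)}e_{\underline l}$ converges in $\Lc^2(\C)$ if and only if $\sum_{\underline l}|I_{\underline l}^{(L)}|^2<\infty$, i.e.\ $\{I_{\underline l}^{(L)}\}\in l^2(\Bbb{Z}^2)$, and then $\|I^{(L)}\|^2=(2\pi)^2\sum_{\underline l}|I_{\underline l}^{(L)}|^2$; the identical argument applied to $C^{(L)}$ and the sequence $\{c_{\underline l}^{(L)}\}$ gives the second pair of claims.

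It remains to check periodicity and reality, under the proviso that the series converge in $\Lc^2(\C)$ (so that the formal manipulations with the series are legitimate, exactly as in Section II). Periodicity is clear, since each $e^{i\underline p\cdot\underline l}$ is $2\pi$-periodic in $p_1$ and in $p_2$ separately, hence so is the $\Lc^2$-limit. For reality I would argue as in Lemma \ref{lemma1}: from the definition of $I_{\underline l}^{(L)}$ together with (\ref{34}) one has $I_{\underline l}^{(L)}=\langle T_1^{l_1}T_2^{l_2}\varphi_{\underline 0},\varphi_{\underline 0}\rangle$, and since $T_1,T_2$ are unitary and commute, $\overline{I_{\underline l}^{(L)}}=\langle\varphi_{\underline 0},T_1^{l_1}T_2^{l_2}\varphi_{\underline 0}\rangle=\langle T_1^{-l_1}T_2^{-l_2}\varphi_{\underline 0},\varphi_{\underline 0}\rangle=I_{-\underline l}^{(L)}$ (this is also directly visible from the closed form (\ref{35}), which is real and invariant under $\underline n\mapsto-\underline n$). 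Hence $\overline{I^{(L)}(\underline p)}=\sum_{\underline l}\overline{I_{\underline l}^{(L)}}\,e^{-i\underline p\cdot\underline l}=\sum_{\underline l}I_{-\underline l}^{(L)}e^{-i\underline p\cdot\underline l}=\sum_{\underline k}I_{\underline k}^{(L)}e^{i\underline p\cdot\underline k}=I^{(L)}(\underline p)$ a.e.\ in $\C$. The reality of $C^{(L)}$ then follows from (\ref{34b}) exactly as in the one-dimensional case, $\overline{C^{(L)}(\underline p)}=1/I^{(L)}(\underline p)$ being real; equivalently one reads $\overline{c_{\underline k}^{(L)}}=c_{-\underline k}^{(L)}$ off the inversion formula (\ref{34bis}) using that $I^{(L)}$ is real.

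I do not expect any serious obstacle here. The only points demanding a line of care are the justification that the complex-exponential system is a complete orthogonal basis of $\Lc^2(\C)$ and the usual bookkeeping ensuring that the interchanges of summation in the reality argument are performed only once the relevant $l^2$ membership of the coefficients is in force; both are handled in complete analogy with the treatment of Lemma \ref{lemma1}.
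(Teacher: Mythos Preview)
Your proposal is correct and follows exactly the route the paper intends: it states explicitly that ``The proof of this Lemma is quite similar to that of Lemma \ref{lemma1} and will not be given here,'' and your argument is precisely the two-dimensional transcription of that proof---Parseval/Riesz--Fischer for the $\Lc^2$-$l^2$ equivalence and norm identity, periodicity from the exponentials, and reality of $I^{(L)}$ from $\overline{I_{\underline l}^{(L)}}=I_{-\underline l}^{(L)}$ (via unitarity and $[T_1,T_2]=0$, or directly from (\ref{35})) followed by reality of $C^{(L)}$ from (\ref{34b}).
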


The proof of this Lemma is quite similar to that of Lemma \ref{lemma1} and will not be given here. We just want to remark that, using
(\ref{35}), it is clear that $\{I_{\underline{l}}^{(L)}\}\in l^2({\Bbb Z}^2)$ for all possible values of $L$. In fact, simple numerical
computations produce  $\sum_{\underline{l}\in{\Bbb Z}^2}|I_{\underline{l}}^{(L)}|^2=1.0883$, if $L=1$, $\sum_{\underline{l}\in{\Bbb
Z}^2}|I_{\underline{l}}^{(L)}|^2=1.00374$, if $L=2$, and so on. Of course, the larger the value of $L$, the closer the value of
$\sum_{\underline{l}\in{\Bbb Z}^2}|I_{\underline{l}}^{(L)}|^2$ to one.

\vspace{2mm}

{\bf Remark:--} Due to the decay behavior of the coefficients $I_{\underline{l}}^{(L)}$ it is clear that $I^{(L)}(\underline p )$ belongs to
other functional spaces. For instance, it is clear that it belongs to $\Lc^1(\C)$, as well as to $C(\C)$.

\vspace{2mm}

Proposition \ref{prop1} can be stated, in a slightly modified form, also in the present context. For that we first introduce the coefficients

\be d_{\underline{k}}^{(L)}=\frac{1}{(2\pi)^2}\,\int_0^{2\pi}\int_0^{2\pi}\,e^{-i\underline{p}
\cdot\underline{k}}\,I^{(L)}(\underline{p})\,d^2\underline{p}.\label{37}\en

Hence we have

\begin{prop} \label{prop2}
Let us assume that $\{d_{\underline{k}}^{(L)}\}, \{c_{\underline{k}}^{(L)}\}\in l^1({\Bbb Z}^2)$. Then $X^{(L)}$ and
$Y^{(L)}:=\sum_{\underline{k}\in\Bbb{Z}^2}\,d_{\underline{k}}^{(L)}\,T_1^{k_1}T_2^{k_2}$ are bounded operators, and $Y^{(L)}=(X^{(L)})^{-1}$.
Moreover, $\F_\Psi^{(L)}$ is the (only) basis for $h_L$ biorthogonal to $\F_\varphi^{(L)}$.
\end{prop}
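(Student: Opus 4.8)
The plan is to imitate, almost verbatim, the proof of Proposition \ref{prop1}, replacing the single index $k\in\Bbb{Z}$ by the pair $\underline{k}\in\Bbb{Z}^2$ and the one–dimensional Poisson summation rule by its two–dimensional analogue. First I would settle boundedness: since each $T_1^{k_1}T_2^{k_2}$ is unitary, $\|T_1^{k_1}T_2^{k_2}\|=1$, so the operator series $X^{(L)}=\sum_{\underline{k}}c^{(L)}_{\underline{k}}\,T_1^{k_1}T_2^{k_2}$ and $Y^{(L)}=\sum_{\underline{k}}d^{(L)}_{\underline{k}}\,T_1^{k_1}T_2^{k_2}$ converge in operator norm, with $\|X^{(L)}\|\le\sum_{\underline{k}}|c^{(L)}_{\underline{k}}|<\infty$ and $\|Y^{(L)}\|\le\sum_{\underline{k}}|d^{(L)}_{\underline{k}}|<\infty$, precisely because $\{c^{(L)}_{\underline{k}}\},\{d^{(L)}_{\underline{k}}\}\in l^1(\Bbb{Z}^2)$.

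Next I would prove $Y^{(L)}=(X^{(L)})^{-1}$. The key identity is the discrete convolution
$$\sum_{\underline{n}\in\Bbb{Z}^2}c^{(L)}_{\underline{n}}\,d^{(L)}_{\underline{l}-\underline{n}}=\delta_{\underline{l},\underline{0}},\qquad \underline{l}\in\Bbb{Z}^2,$$
obtained exactly as in (\ref{IIb}): insert the integral representations (\ref{34bis}) and (\ref{37}) of $c^{(L)}_{\underline n}$ and $d^{(L)}_{\underline n}$, interchange sum and integrals, and use the two–dimensional Poisson rule $\sum_{\underline{n}\in\Bbb{Z}^2}e^{i\underline{n}\cdot(\underline{q}-\underline{p})}=(2\pi)^2\,\delta(\underline{q}-\underline{p})$ for $\underline{p},\underline{q}\in\C$, which collapses the $\underline{q}$–integral and leaves $\frac{1}{(2\pi)^2}\int_{\C}e^{-i\underline{p}\cdot\underline{l}}\,d^2\underline{p}=\delta_{\underline{l},\underline{0}}$. (Equivalently, and more safely, $\{c^{(L)}_{\underline n}\}$ and $\{d^{(L)}_{\underline n}\}$ are the Fourier coefficients of $C^{(L)}(\underline p)=1/I^{(L)}(\underline p)$ and $I^{(L)}(\underline p)$ respectively, and by Young's inequality their convolution is a well-defined $l^1$ sequence whose Fourier transform is $C^{(L)}(\underline p)\,I^{(L)}(\underline p)=1$, whence the claim.) Once this is in hand, multiplying the two absolutely convergent operator series and using $[T_1,T_2]=0$ to add exponents gives
$$X^{(L)}Y^{(L)}=\sum_{\underline{n},\underline{m}}c^{(L)}_{\underline{n}}d^{(L)}_{\underline{m}}T_1^{n_1+m_1}T_2^{n_2+m_2}=\sum_{\underline{l}}\Big(\sum_{\underline{n}}c^{(L)}_{\underline{n}}d^{(L)}_{\underline{l}-\underline{n}}\Big)T_1^{l_1}T_2^{l_2}=\Id,$$
and symmetrically $Y^{(L)}X^{(L)}=\Id$, so $Y^{(L)}=(X^{(L)})^{-1}$.

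It then remains to read off the basis statement. By hypothesis $\F_\varphi^{(L)}$ is a basis for $h_L$, and by (\ref{32}) each $\Psi^{(L)}_{\underline{n}}=X^{(L)}\varphi^{(L)}_{\underline{n}}$ with $X^{(L)}$ bounded and boundedly invertible; a bounded invertible operator carries a basis to a basis, so $\F_\Psi^{(L)}$ is a basis for $h_L$. Biorthogonality $\langle\Psi^{(L)}_{\underline{n}},\varphi^{(L)}_{\underline{k}}\rangle=\delta_{\underline{n},\underline{k}}$ holds by the very construction of the coefficients $c^{(L)}_{\underline k}$ (equation (\ref{34b}) together with the reduction to the case $\underline k=\underline 0$), and uniqueness of the biorthogonal basis is the cited result of \cite{bases}.

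The only genuinely delicate point is the rigorous justification of the formal manipulation with the periodic Dirac comb in the convolution identity; this is why I would rather carry that step out through the convolution theorem for $l^1$ Fourier coefficients, where existence of $c^{(L)}\!*d^{(L)}$, the interchange of summations, and the evaluation of the transform are all controlled by absolute convergence, and one only needs $I^{(L)}(\underline p)$ to be bounded away from zero — which is exactly what underlies $\{c^{(L)}_{\underline k}\}\in l^1(\Bbb{Z}^2)$ in the first place. Everything else is routine bookkeeping, so I expect no further obstruction.
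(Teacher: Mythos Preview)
Your proposal is correct and follows exactly the route the paper intends: the paper does not give a separate proof of Proposition~\ref{prop2} but simply says it ``is very similar to that of Proposition~\ref{prop1}'' and highlights the two-dimensional summation rule $\sum_{\underline{n}}c^{(L)}_{\underline{n}}d^{(L)}_{\underline{l}-\underline{n}}=\delta_{\underline{l},\underline{0}}$, which is precisely your key step. Your alternative justification of that identity via the $l^1$ convolution theorem is a welcome bit of extra rigor over the formal Dirac-comb manipulation, but it is not a different approach---just a cleaner way to carry out the same computation.
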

The proof is very similar to that of Proposition \ref{prop1} and will not be repeated. Not surprisingly, it makes use of the following
summation rule: $\sum_{\underline{n}\in{\Bbb{Z}^2}}{c_{\underline{n}}^{(L)}}\,d_{\underline{l}-\underline{n}}^{(L)}=\delta_{\underline{l},
\underline{0}}$, which in particular implies that $\sum_{\underline{n}\in{\Bbb{Z}^2}}{c_{\underline{n}}^{(L)}}\,d_{-\underline{n}}^{(L)}=1$.

If we are under the assumptions of Proposition \ref{prop2}, and if $X^{(L)}$ is a positive operator, then the set
$\E^{(L)}=\{e^{(L)}_{\underline n}:=(X^{(L)})^{1/2}\varphi_{\underline n}^{(L)}, \,{\underline n}\in\Bbb{Z}^2\}$ is an orthonormal basis for
$h_L$ and $\F_\varphi^{(L)}$ and $\F_\Psi^{(L)}$ are biorthogonal Riesz bases.

\vspace{2mm}

{\bf Remark:--} A simple extension of Lemma \ref{lemma2} can be used to deduce that, at least if $L>1$, the sequences
$\{d_{\underline{k}}^{(L)}\}$, and $\{c_{\underline{k}}^{(L)}\}$ belong to $l^1({\Bbb Z}^2)$. The idea is that, rewriting
$I^{(L)}(\underline{p})=1+I^o_L(\underline{p})$, for all $ \underline{p}\in\C$, with
$$I^o_L(\underline{p})=\sum_{\underline{m}\in{\Bbb{Z}}^2\setminus{(0,0)}}\,(-1)^{L\,m_1\,m_2}\,e^{-\frac{\pi}{2}\,L(m_1^2
+m_2^2)}\,e^{i\underline{p}\cdot\underline{m}},$$ this last function can be easily estimated as follows:

$$ |I^o_L(P)| \leq \sum_{(m_1,m_2)\in\Bbb{Z}^2 \setminus{(0,0)}} e^{-\pi/2 L ({m_1}^2+{m_2}^2)}=  \left(\sum_{m\in\Bbb{Z}}e^{-\pi/2 L
{m}^2}\right)^2-1, $$ which is less that 1 for all $L=2,3,\ldots$. Indeed we find that $|I^o_2(P)| \leq 0.18$, $|I^o_3(P)| \leq 0.03$,
$|I^o_4(P)| \leq 0.007$, and so on. This implies that $I^{(L)}(\underline{p})$ and its inverse are well defined, square-integrable, functions
in $\Lc^2(\C)$, so that  Lemma \ref{lemma2} applies in its two-dimensional version. On the other hand, for $L=1$ we can only conclude that
$|I^o_1(P)| \leq 1.01$, which is not enough to get a similar conclusion. In fact, we will soon see that $L=1$ should be treated differently.

\subsection{Numerical results}

\vspace{4mm}

We show now how the coefficients $c_{\underline{n}}^{(L)}$ of the expansion (\ref{34bis}) can be computed perturbatively. For this we recall
that $I^{(L)}(\underline{p})=1+I^o_L(\underline{p})$. Then formula (\ref{34bis}) can be rewritten as follows: \be
c_{\underline{k}}^{(L)}=\frac{1}{(2\pi)^2}\,\int_0^{2\pi}\int_0^{2\pi}\,\frac{e^{-i\underline{p}
\cdot\underline{k}}}{{1+I^o_L(\underline{p})}}\,d\underline{p}= \frac{1}{(2\pi)^2}\,\int_0^{2\pi}\int_0^{2\pi}\,
e^{-i\underline{p}\cdot\underline{k}} \sum_{n=0}^\infty \left(-I^o_L(\underline{p})\right)^n\,d\underline{p}, \label{add1}\en at least if
$L\geq2$. Indeed, in this case, $|I^o_L(\underline{p})|$ is surely less than 1. Considering only the first two contributions of this expansion
we easily get \be c_{\underline{k}}\simeq \delta_{\underline{k},\,\underline{0}}-\left(1-\delta_{\underline{k},\,\underline{0}}\right)\,
(-1)^{Lk_1k_2}\,e^{-\frac{\pi}{2}L(k_1^2+k_2^2)}.\label{325}\en Of course, in order for this approximation to be meaningful, we further need to
restrict the sum in (\ref{34bis}) only to those $\underline k$ for which $\underline k=(\pm1,0), (0,\pm 1)$. In fact, a contribution like
$\underline k=(\pm 1,\pm1)$ could only be considered in (\ref{34bis}) if we also keep into account in (\ref{add1}) those contributions arising
from $I^o(\underline{P})^2$, which contain terms of the same order. For simplicity, and since our numerical results will show that ours is
already a very good approximation, all these contributions will be simply neglected here.  If we introduce the following subset of
${\Bbb{Z}}^2$, $\Gamma:=\{(1,0),(-1,0),(0,1),(0,-1)\}$, then we get the following expression for $\Psi_{\underline{n}}^{(L)}$: \be
\Psi_{\underline{n}}^{(L)}\simeq \varphi_{\underline{n}}^{(L)}-\,e^{-\frac{\pi}{2}L}\sum_{\underline{s}\in\Gamma}
\varphi_{\underline{n}+\underline{s}}^{(L)}. \label{326}\en

 It is
easy to check now that the set of the approximated vectors $\F_{\Psi}^{(L)}=\{\Psi_{\underline{n}}^{(L)}\}$ obtained in this way is
biorthogonal to $\F_{\varphi}^{(L)}$, with a very good approximation. Indeed if we compute the overlap between two neighboring vectors, for
instance between $\Psi_{1,0}^{(L)}$ and $\varphi_{0,0}^{(L)}$, we find that
$$
|<\Psi_{1,0}^{(L)}, \varphi_{0,0}^{(L)}>|=e^{(-3\pi/2) L} (2-e^{-\pi L})\simeq\left\{
\begin{array}{ll}
0.00016, \quad\mbox{if }L=2,    \\
0.000001, \quad\mbox{if }L=3.   \\
\end{array}
\right.$$

As for the normalization of the vectors, we find that $$|<\Psi_{0,0}^{(L)}, \varphi_{0,0}^{(L)}>|= 1-4e^{-\pi L}\simeq\left\{
\begin{array}{ll}
0.99253, \quad\mbox{if }L=2,  \\
0.99968, \quad\mbox{if }L=3,  \\
0.99999, \quad\mbox{if }L=4,  \\
\end{array}
\right.$$ and so on. We see that the approximation considered here works very well already for $L=2$\footnote{Needless to say,
$|<\Psi_{2,0}^{(L)}, \varphi_{0,0}^{(L)}>|$, $|<\Psi_{1,1}^{(L)}, \varphi_{0,0}^{(L)}>|$, $\ldots$,  are even smaller than $|<\Psi_{1,0}^{(L)},
\varphi_{0,0}^{(L)}>|$}. As expected, we get different conclusions for $L=1$. Indeed, if we try to repeat similar computations, we find that
$|<\Psi_{(1,0)}^{(1)}, \varphi_{(0,0)}^{(1)}>|\simeq0.018$ and that $|<\Psi_{(0,0)}^{(1)}, \varphi_{(0,0)}^{(1)}>|\simeq0.827$, which suggest
that the two functions are not biorthonormal. In other words, what appears to work very well for $L\geq2$, looks a dangerous procedure for
$L=1$. And there is more than this: as for the operator $X^{(L)}$ and $(X^{(L)})^{-1}$ we find that $X^{(L)}\simeq
\Id-e^{-\frac{\pi}{2}L}\sum_{\underline{s}\in\Gamma}\,T_1^{s_1}T_2^{s_2}= \Id-e^{-\frac{\pi}{2}L}K_L$ and $(X^{(L)})^{-1}\simeq
\Id+e^{-\frac{\pi}{2}L}\sum_{\underline{s}\in\Gamma}\,T_1^{s_1}T_2^{s_2}=\Id+e^{-\frac{\pi}{2}L}K_L$, where $K_L=T_1+T_1^{-1}+T_2+T_2^{-1}$.
 In order to check that $\Id+e^{-\frac{\pi}{2}L}K_L$  is a good approximation of the inverse of $X^{(L)}$ we observe that
$$\|X^{(L)}(X^{(L)})^{-1}-\Id\|=\|(X^{(L)})^{-1}X^{(L)}-\Id\|\leq 16 e^{-\pi L}=\left\{
\begin{array}{ll}
0.029879, \quad\mbox{if }L=2,  \\
0.001291, \quad\mbox{if }L=3,
\end{array}
\right.$$ and so on. It is interesting to notice that, for $L=1$, we get into serious troubles. Indeed our estimate appears rather poor:
$\|X^{(1)}(X^{(1)})^{-1}-\Id\|\leq 0.691423$. We are forced again to conclude that the case $L=1$ should be treated separately, and this is
exactly the content of the next section.

\subsection{ What if L=1?}

It is convenient to list few known facts on the $kq-$representation, \cite{zak,zak2}, which will be used in the following.

Let us introduce the following generalized functions: \be \Psi_{kq}(x)=\frac{1}{\sqrt{a}}\sum_{n\in\Bbb Z}e^{ikna}\delta(x-q-na), \label{51}\en
where $a=\sqrt{2\pi}$ (i.e., we are taking here $L=1$), and $k,q\in [0,a[$. It is known that $\Psi_{kq}(x)$ are eigenstates of $T_1$ and $T_2$,
$$
T_1\Psi_{kq}(x)=e^{iqa}\Psi_{kq}(x),\qquad T_2\Psi_{kq}(x)=e^{-ika}\Psi_{kq}(x),
$$
and that they {\em resolve the identity}: \be \int\int_{\Box}\overline{\Psi_{kq}(x)}\Psi_{kq}(x')dk\,dq=\delta(x-x'), \label{52}\en which we
can write, more schematically, as $\int\int_{\Box}|\Psi_{kq}\left>\right<\Psi_{kq}|dk,dq=\Id$. Here $\Box:=[0,a[\times[0,a[$.

Calling $\xi_x$ the generalized eigenstates of the position operator $\hat q$, $\hat q\xi_x=x\xi_x$, $x\in\Bbb R$, we know that an abstract
vector $f$ belongs to a certain Hilbert space $\hat\Hil$ if and only if $f(x):=\left<\xi_x,f\right>$ belongs to $\Lc^2(\Bbb R)$ or,
equivalently, if and only if $f(k,q):=\left<\Psi_{kq},f\right> = \frac{1}{\sqrt{a}}\sum_{n\in\Bbb Z}e^{ikna}f(q-na)$ belongs to $\Lc^2(\Box)$.

More interesting results on the $kq-$representation can be found in \cite{zak,zak2,jan}. A first, well known result, relating coherent states
and $kq-$representation concerns the completeness of $\F_\varphi^{(1)}$ in $\Lc^2(\Bbb R)$, \cite{bgz}: if $h(x)\in\Lc^2(\Bbb R)$ satisfies
$\left<h,\varphi_{\underline n}^{(1)}\right>_{\Lc^2}=0$ for all ${\underline n}\in{\Bbb Z}^2$, then $h(x)=0$ almost everywhere in $\Bbb R$. The
proof uses the resolution of the identity in (\ref{52}), together with the fact that the vector $\varphi_{\underline 0}$, in the
$kq-$representation, has just a single zero in $\Box$. In fact we have
$$
\varphi_{\underline 0}(k,q)=\left<\Psi_{kq},\varphi_{\underline 0}\right>=\sqrt{\frac{1}{\sqrt{2}\pi}}\,e^{-q^2/2}
\theta_3\left(\sqrt{\frac{\pi}{2}}(k-iq),e^{-\pi}\right),
$$
where $\theta_3$ is an elliptic theta function, \cite{grad}. The point $P_0=(k_0,q_0)=\left(\sqrt{\frac{\pi}{2}},\sqrt{\frac{\pi}{2}}\right)$
is the only zero for $\varphi_{\underline 0}(k,q)$ in $\Box$: $\varphi_{\underline 0}(k_0,q_0)=0$. We refer to \cite{bgz} for the details on
this proof.

The relevant application of $kq$-representation for us consists  in finding a vector $\Psi_{\underline 0}^{(1)}$ producing first of all, as in
(\ref{32}), a set of vectors $\F_\Psi^{(1)}$ in $\Lc^2(\Bbb R)$ biorthogonal to $\F_\varphi^{(1)}$. To achieve this aim, let us call
$\Psi_{\underline 0}^{(1)}(k,q)=\left<\Psi_{kq},\Psi_{\underline 0}^{(1)}\right>$ our unknown function in the $k,q$ variables. Then we have
$$
\left<\Psi_{\underline n}^{(1)},\Psi_{kq}\right>=\left<\Psi_{\underline 0}^{(1)},T_2^{-n_2}T_1^{-n_1}\,\Psi_{kq}\right>=
e^{ikan_2-iqan_1}\overline{\Psi_{\underline 0}^{(1)}(k,q)}.
$$
Using now the resolution of the identity (\ref{52}) we get
$$
\delta_{{\underline n},{\underline 0}}=\left<\Psi_{\underline n}^{(1)},\varphi_{\underline 0}\right>=\int\int_{\Box} \left<\Psi_{\underline
n}^{(1)},\Psi_{kq}\right>\left<\Psi_{kq},\varphi_{\underline 0}\right>\,dk\,dq=$$ \be=\int\int_{\Box}
e^{ikan_2-iqan_1}\overline{\Psi_{\underline 0}^{(1)}(k,q)}\varphi_{\underline 0}(k,q)\,dk\,dq. \label{extra1}\en The formal solution of this
equation is easily found: $\Psi_{\underline 0}^{(1)}(k,q)=\frac{1}{2\pi\overline{\varphi_{\underline 0}(k,q)}}$, which has a single singularity
in $P_0$. Incidentally we deduce that, as for $\varphi_{\underline 0}$, also $\Psi_{\underline 0}^{(1)}$ appears to be independent of $L$.
 The solution in the coordinate representation is: \be \Psi_{\underline
0}(x)=\frac{1}{2\pi}\int\int_{\Box}\Psi_{kq}(x)\frac{1}{\overline{\varphi_{\underline 0}(k,q)}}\,dk\,dq, \label{53}\en where we have
removed the unessential suffix $(1)$. The (momentary)
conclusion seems therefore that, using this $\Psi_{\underline 0}(x)$, the set $\F_\Psi^{(1)}$ could be constructed. However, a simple argument shows
that this possibility is not allowed. The reason is the following:

first, we know that $b\varphi_{\underline n}^{(1)}=z_{\underline n}\varphi_{\underline n}^{(1)}$, for all ${\underline n}\in{\Bbb Z}^2$. Hence,
if the biorthogonal basis $\F_\Psi^{(1)}$ can be defined, we would have
$$
\left<\Psi_{\underline n}^{(1)},b\,\varphi_{\underline m}^{(1)}\right>=z_{\underline m}\delta_{{\underline n},{\underline m}}=
\left<\overline{z_{\underline n}}\,\Psi_{\underline n}^{(1)},\varphi_{\underline m}^{(1)}\right>,
$$
as well as $\left<\Psi_{\underline n}^{(1)},b\,\varphi_{\underline m}^{(1)}\right>=\left<b^\dagger\Psi_{\underline n}^{(1)},\varphi_{\underline
m}^{(1)}\right>$. Hence, since $\F_\varphi^{(1)}$ is complete in $\Lc^2(\Bbb R)$, we must have $b^\dagger\Psi_{\underline
n}^{(1)}=\overline{z_{\underline n}}\Psi_{\underline n}^{(1)}$, which in particular means that $\Psi_{\underline 0}$ should be annihilated by
$b^\dagger$: $b^\dagger\Psi_{\underline 0}=0$. But this equation, in the representation space, has a single solution which is not
square-integrable: $\Psi_{\underline 0}(x)=Ne^{x^2/2}$. This implies that the function $\Psi_{\underline 0}(x)$ in (\ref{53}) cannot belong to
$\Lc^2(\Bbb R)$ either.  The conclusion is therefore that, as in \cite{bt1}, the best we can do is to find biorthogonal sets in suitable
subspaces of $\Lc^2(\Bbb R)$, but not in all of this space.

\subsection{What changes when $L=2$}

Interestingly enough, the previous reasoning does not apply for $L>1.$ Indeed in this case  $\F_\varphi^{(L)}$ is not complete in $\Lc^2(\Bbb R)$.
Therefore the equality $\left<\overline{z_{\underline n}}\,\Psi_{\underline n}^{(L)},\varphi_{\underline
m}^{(L)}\right>=\left<b^\dagger\,\Psi_{\underline n}^{(L)},\varphi_{\underline m}^{(L)}\right>$, $\forall {\underline m}\in {\Bbb Z}^2$, does
not necessarily imply that $b^\dagger\,\Psi_{\underline n}^{(L)}=\overline{z_{\underline n}}\,\Psi_{\underline n}^{(L)}$.

This lack of completeness is also reflected by the fact that the set $\{e^{ika n_2-iqan_1}, {\underline n}\in {\Bbb Z}^2\}$ is not complete in
$\Lc^2(\Box)$, if $a^2=4\pi$ (i.e., if $L=2$). For this reason, the same computations giving rise to (\ref{extra1}), produce now the following
equality:
$$
2\int_{0}^{2\pi/a}dk \int_0^{2\pi/a} e^{ika n_2-iqan_1} \left[\overline{\Psi_{\underline 0}(k,q)}\varphi_{\underline 0}(k,q)
+ \overline{\Psi_{\underline 0}(k,q+\frac{2\pi}{a})}\varphi_{\underline 0}(k,q+\frac{2\pi}{a})\right]dq=\delta_{\underline{n},\underline{0}},
$$
which implies that \be{\varphi_{\underline 0}(k,q)} \overline{\Psi_{\underline 0}(k,q)}+ {\varphi_{\underline 0}(k,q+\frac{2 \pi}{a})}
\overline{\Psi_{\underline 0} (k,q+\frac{2 \pi}{a})}=1/2,\label{extra2}\en a.e. for $(k,q)\in
\left[0,\frac{2\pi}{a}\right[\times\left[0,\frac{2\pi}{a}\right[$. This equation may have solutions which are different from the one deduced
for $L=1$, and this explains why for $L=2$ (as well as for $L=3, 4, \ldots$) our previous conclusion about the non existence of the set
$\F_\Psi^{(1)}$ cannot be extended.

Summarizing, we have discussed so far two possible strategies to construct a set $\F_\Psi^{(L)}$ out of the given set of coherent states
$\F_\varphi^{(L)}$. These two possibilities, which work if $L=2,3,4,\ldots$ but not for $L=1$, are the following:

$\bullet$ a perturbative expansion as in (\ref{31b}) and (\ref{32}). This works directly in the coordinate space, and produces a function
$\Psi_{\underline 0}(x)$ and, from this square integrable function, the set $\F_\Psi^{(L)}$ we were looking for.

$\bullet$ {\em the $(k,q)$-way to biortogonality}: this is more delicate, but, in principle, non perturbative: one has to find the solution for
(\ref{extra2}), or for the extended version of this equation for $L=3, 4,\ldots$, and then use the resolution of the identity to go back to the
space $\Lc^2(\Bbb R)$.

\vspace{2mm}

{\bf Remark:--} it is very easy to check that some of the features of $\F_\varphi^{(L)}$ are shared by $\F_\Psi^{(L)}$. First of all, by
construction, the set $\F_\Psi^{(L)}$ is stable under the action of $T_1$ and $T_2$. Moreover, if we introduce the operator
$B^{(L)}:=X^{(L)}b(X^{(L)})^{-1}$, it is easy to check that $\Psi_{\underline{n}}^{(L)}$ is an eigenstate of $B^{(L)}$, with eigenvalue
$z_{\underline{n}}$:
$$
B^{(L)}\Psi_{\underline{n}}^{(L)}=\left(X^{(L)}b(X^{(L)})^{-1}\right)\left(X^{(L)}\varphi_{\underline{n}}^{(L)}\right)=X^{(L)}b
\varphi_{\underline{n}}^{(L)}=z_{\underline{n}}X^{(L)}\varphi_{\underline{n}}^{(L)}=z_{\underline{n}} \Psi_{\underline{n}}^{(L)}.
$$
It is also possible, in principle, to introduce an extended version of the Heisenberg uncertainty relation. However, rather than going in this
direction, in the next section we will briefly discuss the relation of our results with pseudo-hermitian quantum mechanics.

\section{$N\geq 3$ and relations with Pseudo-hermitian quantum mechanics}

The general procedure introduced in Section II, and adopted in Section III for coherent states, can be easily extended to all possible $N$, at least when $A_1$, $A_2$, $\ldots$, $A_N$ mutually commute. The idea is exactly the same: starting with $\varphi_{k_1,\ldots,k_N}=A_1^{k_1}\ldots A_N^{k_N}\varphi$, and assuming that they are linearly independent for all $k_j\in\Bbb Z$, we look for a new vector $\Psi$ as $\Psi=\sum_{k_1,\ldots,k_N\in \Bbb Z}c_{\bf k}\varphi_{\bf k}$. Here, to simplify the notation, ${\bf k}=(k_1,\ldots,k_N)$. Then, as in Section II, we can write
$$
\Psi_{\bf n}=A_1^{n_1}\ldots A_N^{n_N}\Psi=X\varphi_{\bf n},\quad \mbox{ where }\quad X=\sum_{\bf k\in {\Bbb Z}^N}c_{\bf k} A_1^{n_1+k_1}\ldots A_N^{n_N+k_N}.
$$
Due to the commutativity of the $A_j$'s, it is easy to deduce that $\left<\Psi_{\bf n},\varphi_{\bf k}\right>=\delta_{\bf n,k}$ if and only if $\left<\Psi_{\bf n},\varphi_{\bf 0}\right>=\delta_{\bf n,0}$. Now, multiplying both sides of this last equation by $e^{i{\bf p}\cdot {\bf n}}$, with ${\bf p}=(p_1,\ldots,p_n)$ and $p_j\in\left]0,2\pi\right]$, and summing on $\bf n$, we recover equation (\ref{II8}), with essentially the same definitions as in (\ref{II9}). The next steps as in Section II can be carried out. Of course, results on convergence of the several series involved in our computations strongly depend on the explicit form of the $A_j$, as it appears clear in Section III. These are the extra information needed to make our results completely rigorous, as in the previous sections.

\vspace{3mm}

An interesting aspect of our problem is the following: it is well known that the eigenvectors of non self-adjoint operators are not mutually orthogonal. However, particularly in connections with some physical systems, \cite{bender,ali}, they could produce biorthogonal sets and, sometimes, biorthogonal bases. Also, these bases could be related to some family of raising and lowering operators, \cite{bagmore}, and to some non self-adjoint number-like operator, of the same kind one often find in the literature on extended harmonic oscillators, \cite{eho}. For this reason, it is interesting to show here how the families $\F_\varphi^{(L)}$, $\F_\Psi^{(L)}$ and $\E^{(L)}$ give rise to three different, isospectral,
operators, related by some intertwining relations. In particular, two of these operators will turn out to be one the adjoint of the other, while the third
operator is self-adjoint.

Let us first recall that, for us, $\F_\varphi^{(L)}$ and $\F_\Psi^{(L)}$ are biorthogonal bases in $h_L$. This means, using the Dirac
notation\footnote{This notation will be adopted in all this section.}, that the following resolutions of the identity hold:
$\Id_L=\sum_{\underline{n}}|\varphi_{\underline{n}}^{(L)}\left>\right<\Psi_{\underline{n}}^{(L)}|=
\sum_{\underline{n}}|\Psi_{\underline{n}}^{(L)}\left>\right<\varphi_{\underline{n}}^{(L)}|$, where $\Id_L$ is the identity operator in $h_L$.
Under the assumptions of Proposition \ref{prop2}, $X^{(L)}$ and $(X^{(L)})^{-1}$ are bounded operators. If we define
$S_\Psi^L=\sum_{\underline{n}}|\Psi_{\underline{n}}^{(L)}\left>\right<\Psi_{\underline{n}}^{(L)}|$ and
$S_\varphi^L=\sum_{\underline{n}}|\varphi_{\underline{n}}^{(L)}\left>\right<\varphi_{\underline{n}}^{(L)}|$, it is easy to check, first of all,
that $S_\Psi^L(X^{(L)})^{-1}=X^{(L)}S_\varphi^L$.

Moreover, if $X^{(L)}$ is positive, then $\E^{(L)}=\{e_{\underline{n}}^{(L)}:=(X^{(L)})^{1/2}\varphi_{\underline{n}}^{(L)},\,\underline{n}\in
{\Bbb Z}^2 \}$ is an o.n. basis for $h_L$. Then we have, for $f\in h_L$,
$$
S_\varphi^L\,f=\sum_{\underline{n}}\left<\varphi_{\underline{n}}^{(L)},f\right>\varphi_{\underline{n}}^{(L)}=
\sum_{\underline{n}}\left<e_{\underline{n}}^{(L)},(X^{(L)})^{-1/2}f\right>\,(X^{(L)})^{-1/2}e_{\underline{n}}^{(L)}=(X^{(L)})^{-1}f.
$$
Hence $S_\varphi^L=(X^{(L)})^{-1}$. Analogously one finds that $S_\Psi^L=X^{(L)}$, and the relation $S_\Psi^L(X^{(L)})^{-1}=X^{(L)}S_\varphi^L$
is clearly verified.

Let us now introduce (formally\footnote{These operators could be unbounded, so that a rigorous definition implies knowledge of their domains. This aspect is not very interesting in our analysis, so that will be neglected.}) the following operators: $h=\sum_{\underline{n}}\epsilon_{\underline{n}}|e_{\underline{n}}^{(L)}\left>\right<e_{\underline{n}}^{(L)}|$,
$H=\sum_{\underline{n}}\epsilon_{\underline{n}}|\varphi_{\underline{n}}^{(L)}\left>\right<\Psi_{\underline{n}}^{(L)}|$, and
$H^\dagger=\sum_{\underline{n}}\epsilon_{\underline{n}}|\Psi_{\underline{n}}^{(L)}\left>\right<\varphi_{\underline{n}}^{(L)}|$. Here
$\{\epsilon_{\underline{n}}\}$ is an arbitrary sequence of real numbers (bounded or not, not necessarily positive, for our purposes). It is
clear that $h=h^\dagger$. These three operators have, by construction, the same eigenvalues but different eigenvectors. Indeed we have:
$$
he_{\underline{n}}^{(L)}=\epsilon_{\underline{n}}e_{\underline{n}}^{(L)}, \quad
H\varphi_{\underline{n}}^{(L)}=\epsilon_{\underline{n}}\varphi_{\underline{n}}^{(L)}, \quad
H^\dagger\Psi_{\underline{n}}^{(L)}=\epsilon_{\underline{n}}\Psi_{\underline{n}}^{(L)}.
$$
Moreover, among the others, the following intertwining relations can be deduced:
$$
X^{(L)}H= H^\dagger X^{(L)}, \quad h\left(S_\varphi^L\right)^{1/2}=\left(S_\varphi^L\right)^{1/2} H^\dagger, \quad
H\left(S_\varphi^L\right)^{1/2}=\left(S_\varphi^L\right)^{1/2}h.
$$
Hence we are back to the general structure considered in several papers, see \cite{bender}, \cite{ali} and references therein, concerning
quantum mechanics with non self-adjoint hamiltonians. There are also obvious connections with the theory of intertwining operators, used in the construction of exactly solvable
models, see for instance \cite{intop}. Incidentally we observe that the one described here is a general scheme which can be proposed starting with any pair of biorthogonal bases. What
is more related to the construction proposed in Section III is the existence of two lowering operators ($b$ and $B^{(L)}$) and the fact that the intertwining
operators $S_\varphi^L$ and $S_\Psi^L$ can be written in terms of the operator $X^{(L)}$ in (\ref{33b}).


\section{Conclusions}

We have shown how and under which conditions biorthogonal sets of  {\em coherent-like} vectors can be constructed, closed under the action of certain unitary operators, and how they produce (non) self-adjoint hamiltonians and intertwining  operators. We have seen that the main mathematical problem in our analysis consists in the analysis of the convergence of some series, and this is related to the analytic expression of the operators defining the biorthogonal sets.

A possible application of our general procedure is surely  the construction of some quantization procedure, see \cite{gazbook} and references therein. In the usual literature on this subject one uses standard coherent states, which produces a resolution of the identity. Our biorthogonal sets also produce a weak resolution, on some suitable subspace of the Hilbert space, so that we expect that interesting results can be deduced. Also some more mathematical aspects of the construction, mainly related to the unboundedness of the lowering and raising operators introduced in the previous section are presently object of investigation, also in connection with their algebraic properties.

\section*{Acknowledgments}

This work has been financially supported in part by GNFM and in part by MURST.

\end{document}